\newtheorem{theorem}{Theorem}
\newtheorem{acknowledgement}[theorem]{Acknowledgement}
\newtheorem{example}[theorem]{Example}
\newtheorem{lemma}[theorem]{Lemma}
\newtheorem{proposition}[theorem]{Proposition}
\newtheorem{remark}[theorem]{Remark}
\title{Dynamical Properties of Gaussian Chains and Rings with Long Range Interactions}
\author{
Wolfgang Bock\\
{\small Technomathematics Group}\\
{\small	University of Kaiserslautern}\\
{\small	P.\ O.\ Box 3049, 67653 Kaiserslautern, Germany}\\
{\small E-Mail:bock@mathemaik.uni-kl.de}\\[.3cm]
Jinky B. Bornales\\
{\small MSU-IIT Iligan, Andres Bonifacio Avenue, Tibanga},\\ 
{\small	9200 Iligan City, Philippines}\\
{\small E-Mail:jinky.bornales@g.msuiit.edu.ph}\\[.3cm]
Ludwig Streit\\
{\small CIMA, University of Madeira, Campus da Penteada},\\
{\small	9020-105 Funchal, Portugal}\\
{\small BiBoS, Universit{ä}t Bielefeld, Germany}\\
{\small E-Mail:streit@uma.pt}}
\begin{document}

% Use the \preprint command to place your local institutional report
% number in the upper righthand corner of the title page in preprint mode.
% Multiple \preprint commands are allowed.
% Use the 'preprintnumbers' class option to override journal defaults
% to display numbers if necessary
%\preprint{}

%Title of paper

% repeat the \author .. \affiliation  etc. as needed
% \email, \thanks, \homepage, \altaffiliation all apply to the current
% author. Explanatory text should go in the []'s, actual e-mail
% address or url should go in the {}'s for \email and \homepage.
% Please use the appropriate macro foreach each type of information

% \affiliation command applies to all authors since the last
% \affiliation command. The \affiliation command should follow the
% other information
% \affiliation can be followed by \email, \homepage, \thanks as well.

%\homepage[]{Your web page}
%\thanks{}
%\altaffiliation{}

%Collaboration name if desired (requires use of superscriptaddress
%option in \documentclass). \noaffiliation is required (may also be
%used with the \author command).
%\collaboration can be followed by \email, \homepage, \thanks as well.
%\collaboration{}
%\noaffiliation

% insert suggested PACS numbers in braces on next line

% insert suggested keywords - APS authors don't need to do this
%\keywords{}

%\maketitle must follow title, authors, abstract, \pacs, and \keywords
\maketitle

\begin{abstract}
	Various authors have invoked discretized fractional Brownian (fBm) motion as
	a model for chain polymers with long range interaction of monomers along the
	chain. We show that for these, in contrast to the Brownian case, linear
	forces are acting between \textit{all }pairs of constituents, attractive for
	small Hurst index $H$ and mostly repulsive when $H$ is larger than 1/2. In
	the second part of this paper we extend this study to periodic fBm and
	related models with a view to ring polymers with long range interactions.
\end{abstract}

% body of paper here - Use proper section commands
% References should be done using the \cite, \ref, and \label commands
\section{Introduction}

Fractional Brownian motion (fBm) has, among other applications, attracted
some interest as a model for the conformation of chain polymers \cite{BBCES}%
\cite{Cher}\cite{GOSS}\cite{Hammouda}, as a generalization of the classical
Brownian models (see e.g.\ \cite{RC}, and references therein).\
Mathematically this enlarged class involves non-Martingale, non-Markovian
processes, physically on the other hand it is of interest because of its
more general scaling properties and the long range interactions it describes.

Fractional Brownian motion $B^{H},\,\ $conventionally\ starting at $%
B^{H}(0)=0,$ gives rise to a discrete, mean zero Gaussian process via%
$$
	X_{k}=B^{H}(k),\quad k=0,1,2,....
$$%
We denote the increments as%
$$
	Y_{k}=X_{k+1\ }-X_{k\ },\quad k=0,1,2,...,n-1
$$%
with finite dimensional probability densities $\rho _{n}$ given by%
$$
	\int d^{n}y\rho _{n}(y)e^{i(y,\lambda )}=\mathbb{E}\left( e^{i(Y,\lambda
		)}\right) =\exp \left( -\frac{1}{2}\left( \lambda ,R\lambda \right) \right)
$$%
and the covariance matrix%
$$
	R_{ik}=\mathbb{E}\left( Y_{i}Y_{k}\right) =\frac{1}{2}\left\vert
	k-l+1\right\vert ^{2H}+\frac{1}{2}\left\vert k-l-1\right\vert
	^{2H}-\left\vert k-l\right\vert ^{2H}.
$$
Hence%
$$
	\rho _{n}(y)=const.\exp \left( -\frac{1}{2}\left( y,Ay\right) \right)\, \text{
		with  }A=R^{-1},
$$%
which can be interpreted as a canonical density with regard to an energy
given by $\left( y,Ay\right) .$

The first part of this paper intends to elucidate the nature of the
corresponding interaction between monomers. We show that these (and more
general Gaussian) models can be interpreted as equilibrium models for
particles with attractive and repulsive elastic forces acting between any
two of them, i.e.%
$$
	\left( y,Ay\right) =\sum g_{kl}(x_{k}-x_{l})^{2}
$$%
and we display the strength $g_{kl}$ of these forces for different Hurst
indices $H$.

Periodic fBm cannot be constructed \ in a direct analogy to the Brownian
case as in (\ref{pBm}) because of the non-Markovian nature of fBm. As
pointed out by Istas \cite{Istas}, this problem can be solved by replacing
the distance $D$ in%
$$
	E\left( \left( B^{H}(t)-B^{H}(s)\right) ^{2}\right) =D^{2H}(t-s)=\left\vert
	t-s\right\vert ^{2H}
$$%
by the geodesic distance $d$ on the circle of length $L:$%
$$
	d(t-s)=\min \left( \left\vert t-s\right\vert ,L-\left\vert t-s\right\vert
	\right) ,
$$%
however with the limitation that this leads to a positive semi-definite
covariance function, and hence a Gaussian process. This is only the case if $%
H\leq 1/2.$ For this class of models we shall in particular compute the
coupling constants $g_{kl}$ and the \ spectrum of the quadratic form $\sum
g_{kl}(x_{k}-x_{l})^{2}$ for different values of the Hurst parameter $H$.

Finally we propose a class of periodic models which, contrary to the
non-admissible fBm with large Hurst index, admit long range repulsive forces
between the constituents.

It is proper to note that as far as more realistic polymer models are
concerned, these Gaussian models must be modified to suppress
self-crossings, see e.g. \cite{BBCES}\cite{GOSS}. 
%Extensions of these
%numerical and analytical studies are under way \cite{inprep}.

\section{The Nature of the Interactions along the Chain}

For applications in physics it would be desirable to rewrite the energy form 
$E(y)=\left( y,Ay\right) $ as an expression in terms of the\ monomer
position variables $x_{k\text{ }}$. Indeed one has

\begin{lemma}
	For $x=x_{0},...,x_{n}\in R^{n+1}$ and $y_{k}=x_{k\ }-x_{k-1},$ $A$ a
	symmetric $n\times n$ matrix, the identity%
	\begin{equation}
		\left( y,Ay\right) =\sum\limits_{k,l=1}^{n\
		}a_{kl}y_{k}y_{l}=\sum\limits_{k,l=0}^{n}g_{kl}(x_{k}-x_{l})^{2}  \label{A}
	\end{equation}%
	holds with 
	$$
		g_{kl}=-\frac{1}{2}\left( a_{kl}+a_{k+1,l+1}-a_{k,l+1}-a_{k+1,l}\right) ,
	$$%
	setting $a_{kl}=0$ whenever $k,l\notin $ $\left\{ 1,...,n\right\} .$
	
	Conversely, for $s\leq t$%
	$$
		a_{st}=2\sum_{i=0}^{s-1}\sum_{k=t}^{n}g_{ik.}
	$$
\end{lemma}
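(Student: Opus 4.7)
The plan is to reduce everything to an elementary polarization identity and then do careful index bookkeeping; the converse will fall out of a two-dimensional telescoping sum.

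For the forward direction I would first establish the purely algebraic identity
$$
2(x_k-x_{k-1})(x_l-x_{l-1}) = -(x_k-x_l)^2+(x_k-x_{l-1})^2+(x_{k-1}-x_l)^2-(x_{k-1}-x_{l-1})^2,
$$
which follows from four applications of $2ab=a^2+b^2-(a-b)^2$ together with the observation that all pure $x_i^2$ contributions cancel. Substituting this into $2(y,Ay)=2\sum_{k,l=1}^{n}a_{kl}y_ky_l$ yields four double sums, each involving $(x_\alpha-x_\beta)^2$ with slightly shifted indices on $a$.

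Next I would shift the index of summation in each of the four sums so that the squared term is uniformly $(x_k-x_l)^2$ with $k,l\in\{0,\dots,n\}$. The boundary convention $a_{kl}=0$ whenever $k$ or $l$ lies outside $\{1,\dots,n\}$ makes the shifted sums extend automatically to the common range $\{0,\dots,n\}^2$ without introducing spurious terms. Collecting the resulting coefficient in front of $(x_k-x_l)^2$ gives exactly $-a_{kl}+a_{k,l+1}+a_{k+1,l}-a_{k+1,l+1}$, so after dividing by $2$ we recover the claimed formula for $g_{kl}$.

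For the converse I would note that the formula for $g_{kl}$ can be rewritten as a mixed second finite difference, $-2g_{kl}=a_{kl}-a_{k,l+1}-a_{k+1,l}+a_{k+1,l+1}$, and then simply sum this twice. For $s\leq t$, the inner sum
$$
\sum_{k=t}^{n}\bigl(a_{ik}-a_{i,k+1}\bigr)=a_{it}-a_{i,n+1}=a_{it}
$$
and analogously $\sum_{k=t}^{n}(-a_{i+1,k}+a_{i+1,k+1})=-a_{i+1,t}$, using $a_{\cdot,n+1}=0$. The remaining sum over $i$ telescopes,
$$
\sum_{i=0}^{s-1}\bigl(a_{it}-a_{i+1,t}\bigr)=a_{0t}-a_{st}=-a_{st},
$$
using $a_{0,\cdot}=0$. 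Hence $\sum_{i=0}^{s-1}\sum_{k=t}^{n}(-2g_{ik})=-a_{st}$, which rearranges to the stated inversion formula.

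Neither step is deep; the main potential pitfall is simply keeping track of boundary indices consistently under the convention $a_{kl}=0$ for $k,l\notin\{1,\dots,n\}$ when performing the four index shifts in the forward direction, and exploiting exactly the same convention to make the double telescoping in the converse direction close cleanly.
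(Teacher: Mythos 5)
Your argument is correct: the polarization identity, the four index shifts using the zero‑padding convention, and the double telescoping for the converse all check out. The paper itself only says ``by direct verification,'' and your proof is exactly that verification carried out in full, so there is nothing to compare beyond noting that you have supplied the details the authors omitted.
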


\begin{proof}
	By direct verification.
\end{proof}

\begin{remark}
	As indicated above, expressions such as $\left( y,Ay\right) $ occur in the
	probability densities of mean zero Gaussian processes. In physics they play
	the role of potential energies in canonical ensembles and the lemma shows
	that these energies can always be understood as the sum of harmonic
	oscillator potentials with strength $g_{kl}$ acting on the monomer pairs $%
	k,l $, see e.g. \cite{RC}.
\end{remark}

To understand the nature in particular of fBm models which were suggested
for chain polymers it is interesting to compute those coupling constants. We
show these for a chain of length $n=61$; specifically we display the
couplings of the center constituent to all the others. For small Hurst index
($H<1/2$), as one might expect from the more compact nature of the
corresponding fBm trajectories, all of the interactions are attractive.

\begin{figure}
	\includegraphics[width=\textwidth]{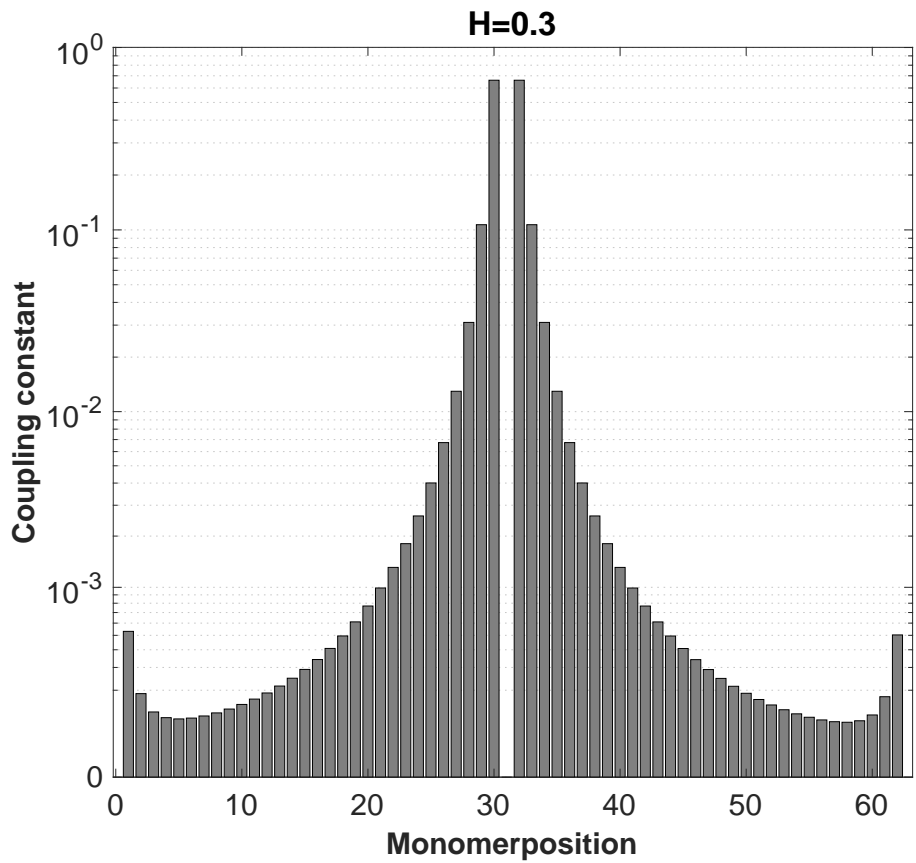}
\caption{Coupling constants for a fBm chain with 61 monomers and Hurst index H=0.3.}
\end{figure}

$$
	g_{31,i}>0\text{ \ for all }i\neq 31.
$$

The rise at both ends of the chain is a finite-length effect.

Conversely, for $H>1/2,$ nearest neighbors are again attracted, but
constituents further away on the chain are repelled.

\begin{figure}
\includegraphics[width=\textwidth]{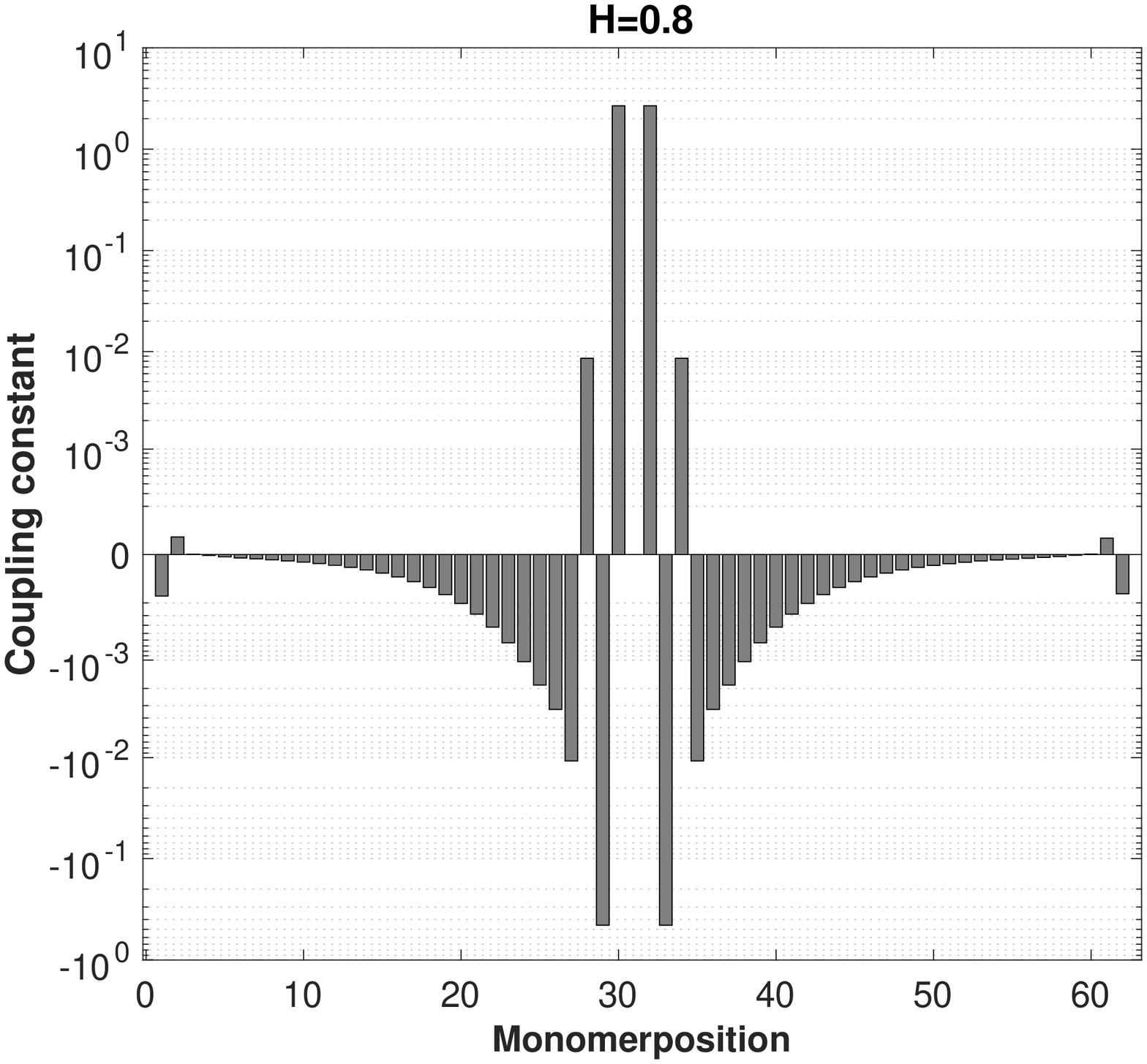}
\caption{Coupling constants for a fBm chain with 61 monomers and Hurst index H=0.8.}
\end{figure}
Finally, for\ $H\ \gtrapprox $ $0.75964,$ a value obtained via a bisection method, the third-nearest neighbors are also
attracted, in our example%
$$
	g_{31,i}>0\text{ \ for }i=31\pm 1\text{\ and for }i=31\pm 3
$$%
\begin{figure}\label{fig_critcoup}
	\includegraphics[width=\textwidth]{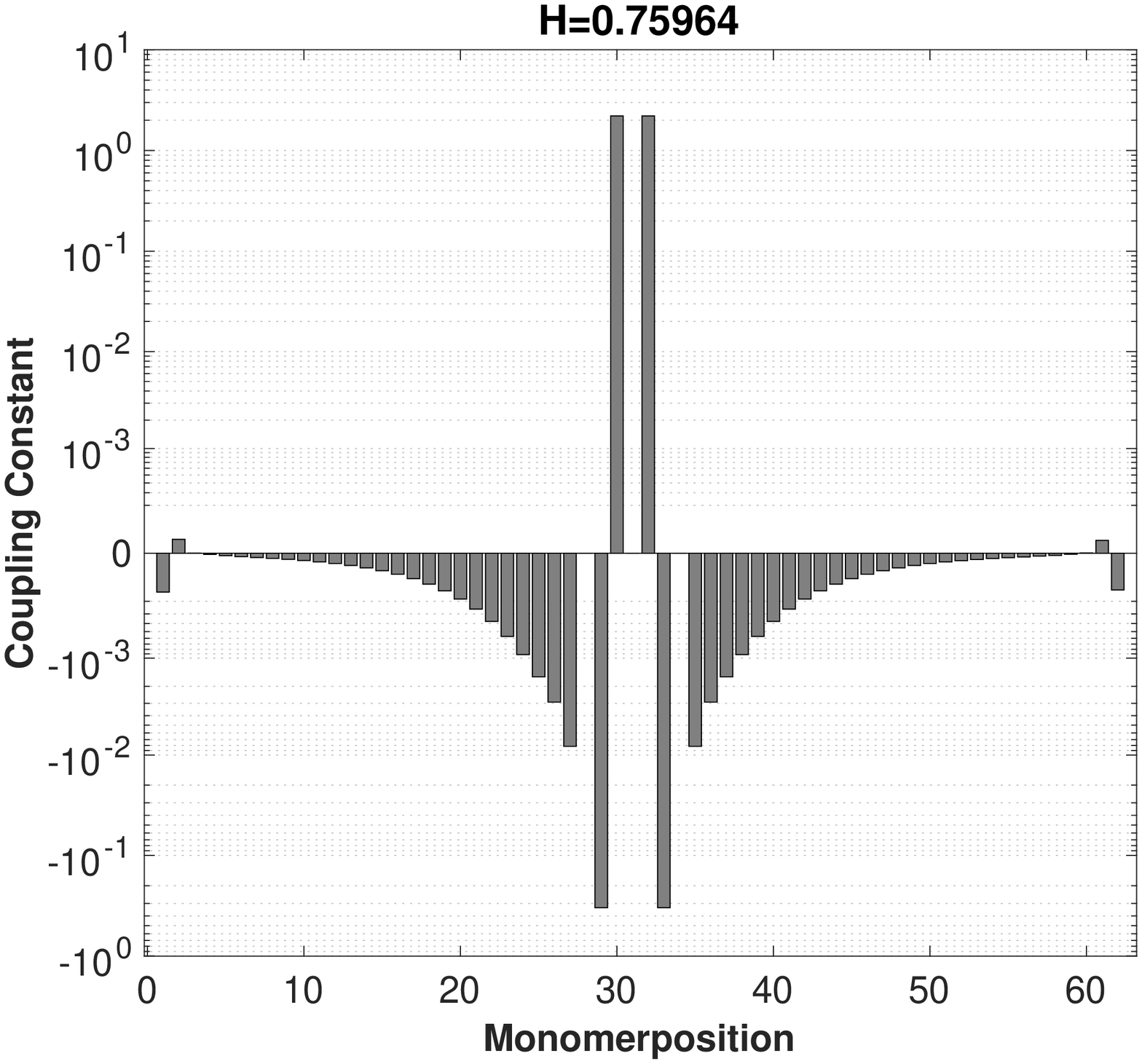}
\caption{Coupling constants for a  fBm chain with 61 monomers and Hurst index H=0.75964. The coupling constant of the third oscillators from the middle are zero.}
\end{figure}
\section{The Energy E as a Quadratic Form in x}

At times it can be useful to have the energy%
\begin{equation}
	E(x)=\sum\limits_{k>l=0}^{n}g_{kl}(x_{k}-x_{l})^{2}  \label{E(x)}
\end{equation}%
as a quadratic form%
$$
	E(x)=\left\langle x,\mathcal{H}x\right\rangle
$$%
with the matrix $\mathcal{H}$ to be determined. Setting%
$$
	g_{kl}=g_{lk\text{\ }}\text{ \ \ and \ }g_{kk}=0
$$%
one finds directly 
\begin{equation}
	h_{ik}=\left\{ 
	\begin{array}{ccc}
		-g_{ki} & \text{if} & k\neq i \\ 
		\sum_{j}g_{jk} & \text{if} & k=i%
	\end{array}%
	\right.  \label{H}
\end{equation}%
Note that the matrix $\mathcal{H}$ has $x=(1,....,1)$ as eigenvector with
eigenvalue zero, in accordance with equation (\ref{E(x)}). \ This
corresponds to the "increment" vector $y=(x_{k+1}-x_{k\ })_{k=0}^{n-1}$
being zero. All the other eigenvalues of $\mathcal{H}$ are then those of the
matrix $A$ in (\ref{A}).

\section{Cyclic Conformations}

In an equidistant periodic placement of $n~$\ indices $k=0,...,n-1$ on a
circle, there are $\left[ \frac{n}{2}\right] $ distinct geodesic distances%
$$
	d(i,k)=\min \left( \left\vert i-k\right\vert ,n-\left\vert i-k\right\vert
	\right) .
$$

\subsection{Periodic Fractional Brownian Motion}

As pointed out by \cite{Istas}, periodic fBm on a circle of length $L$ is
obtained by setting%
\begin{eqnarray}
	\mathbb{E}\left( \left( b^{H}(s)-b^{H}(t)\right) ^{2}\right) &=&\left( \min \left(
	\left\vert s-t\right\vert ,L-\left\vert s-t\right\vert \right) \right)
	^{2H} \nonumber\\
	&=:&\left( d(s,t)\right)^{2H}, \nonumber
\end{eqnarray}
which serves to define a periodic centered Gaussian process $b^{H}$ iff $%
H\leq 1/2.$ \ 

We put $b^{H}(0)=0$ without loss of generality, and the covariance is%
\begin{equation}
	\mathbb{E}\left( b^{H}(s)b^{H}(t)\right) =\frac{\left( \left( d(s,t)\right)
	^{2H}+\left( d(s,0)\right) ^{2H}-\left( d(s,t)\right) ^{2H}\right)}{2}
	\label{cov}
\end{equation}

As a discretized version we set $L=n+1$%
$$
	X_{k}=b^{H}(k),\text{ \ \ }k=0,1,\ldots ,n
$$%
with increments%
$$
	Y_{k}=X_{k\ +1}-X_{k},\text{ \ \ }k=0,...,n\ -1
$$%
and a joint density%
$$
	\rho (y)=const.\exp \left( -\frac{1}{2}(y,Ay)\right)
$$%
where the energy matrix%
$$
	A=R^{-1}
$$%
is obtained as the inverse of the correlation matrix%
$$
	R_{ik}=E(Y_{i}Y_{k}).
$$%
This covariance kernel is positive definite, if and only if the Hurst index $%
H$ is smaller than $1/2$; for $H>1/2\,\ $the matrices 
\begin{equation}
	\mathbb{E}\left( X_{k}X_{l}\right) =\frac{
	d^{2H}(k)+d^{2H}(l)-d^{2H}(k-l)}{2}  \label{cor}
\end{equation}%
$$
	\mathbb{E}\left( Y_{k}Y_{l}\right) =\frac{
	d^{2H}(k-l+1)+d^{2H}(k-l-1)-2d^{2H}(k-l)}{2}
$$%
will fail to be positive semi-definite.

The coupling constants $g_{k}$ for discrete periodic
fBm of the Hurst index $H=0.3$ is given in Figure 4 . 

\begin{figure}\label{cyc_03}
	\includegraphics[width=\textwidth]{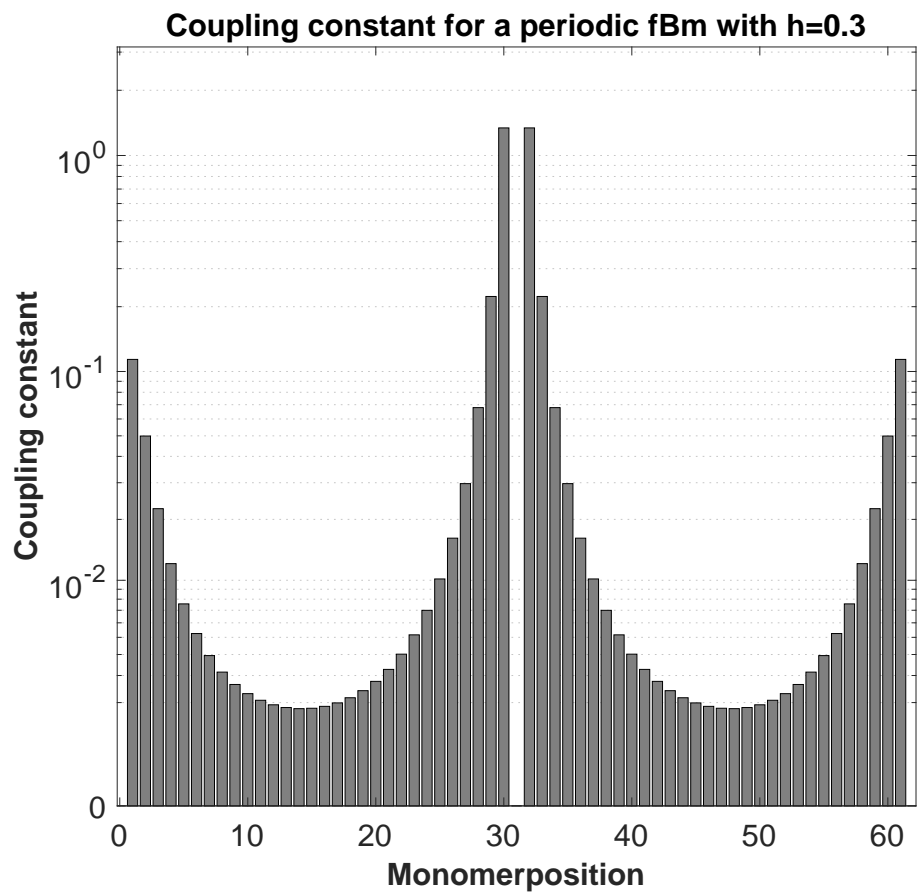}
	\caption{Coupling constants for a periodic fBm ring with 61 monomers and Hurst index H=0.3.}
\end{figure}

\subsection{General Cyclic Models}

For the $n$ Gaussian variables\ $y_{k}$ we define 
$$
	\rho (y)=const.\exp \left( -\frac{1}{2}(y,Ay)\right)
$$

starting from $\mathcal{H}(x)$ and the equation (\ref{A}), assuming that the 
$g_{ik}$ depend only on the "geodesic" distance between the indices $i$ and $%
k\,\ $i.e.

$$
	g_{ik}=g_{d(i,k)}.
$$

We introduce the symmetric $(n+1)\times (n+1)$ matrix $G=circ(0,g_{1},g_{2}%
\ldots ,g_{2},g_{1})$ 
$$
	G=\left( 
	\begin{array}{ccccccccc}
		0 & g_{1} & g_{2} & ... & ... & ... & g_{3} & g_{2} & g_{1} \\ 
		g_{1} & 0 & g_{1} & ... & ... & ... & ... & g_{3} & g_{2} \\ 
		g_{2} & g_{1} & 0 &  &  &  &  &  & g_{3} \\ 
		\vdots &  &  & \ddots &  &  &  &  & \vdots \\ 
		\vdots &  &  &  &  & \ddots &  &  & \vdots \\ 
		g_{3} &  &  &  &  &  & 0 &  & g_{2} \\ 
		g_{2} &  &  &  &  &  &  & 0 & g_{1} \\ 
		g_{1} & g_{2} & g_{3} & ... & ... & ... & g_{2} & g_{1} & 0%
	\end{array}%
	\right) .
$$

In view of (\ref{H}) $\mathcal{H}$ then is of the form%
\begin{equation}
	\mathcal{H}=g \boldsymbol{1-G}  \label{eitsch}
\end{equation}%
where $\boldsymbol{1}$ is the $\left( n+1\right) \times \left(
n+1\right) $ unit matrix, and $g$ is the row sum of $\boldsymbol{G}$ 
$$
	g=\sum_{i=1}^{n}g_{i}\ \text{\ \ with \ }g_{k}:=g_{n+1-k}\text{ \ if \ \ }%
	k>n/2.
$$

The energy spectrum is that of the matrix%
$$
	\mathcal{H}=g \boldsymbol{1-G}={circ}\left( \sum_{k=1}^{n\text{ }%
	}g_{k},-g_{1},\ldots ,-g_{n}\right)
$$%
with the exception of the eigenvalue zero. \ By formula (\ref{ev}) in the
Appendix, one has an explicit form for these energy eigenvalues as a
function of the coupling constants:%
\begin{equation}
	\lambda _{m}=\sum_{k=1}^{n}g_{k}\left( 1-\cos \left( \frac{2mk\pi }{n+1}%
	\right) \right) .  \label{lambda}
\end{equation}

\begin{example}
	\index{EX1|see{EX1}}Particular models of interest might be ones with a
	scaling behavior%
	$$
		g_{k}\sim k^{-\gamma }
	$$
	
	and \ 
	
	(1) all couplings\ positive (generalizing $H<1/2$), or
	
	(2) all couplings except the nearest neighbor one repulsive (as an
	alternative to a periodic fBm with $H>1/2$):%
	$$
		g_{1}>0,\ \ \ g_{k}<0\ \ for\ \ k>1.
	$$
	
	Models of this latter nature are expected to produce stiff polymers.
\end{example}

\subsection{Stiff Cyclic Polymers}

Stiff chain polymers might be modelled by fBm with Hurst parameter $H>1/2.$
For such models we saw that the next-to-nearest neighbor coupling constants
are negative, the nearest neighbor is attracted, such as e.g. in \cite%
{Winkler}. As Istas points out \cite{Istas}, there is no cyclic version for
fBm with $H>1/2$. \ But other cyclic models with attractive-repulsive
dynamics can be constructed, e.g. as follows.

\begin{example}\label{Exam:XXX}
	Taking%
	$$
		G=circ(0,g_{1},g_{2},0,\ldots ,0,g_{2},g_{1}),
	$$%
	hence 
	$$
		\mathcal{H}=circ(2\left( g_{1}+g_{2}\right) ,-g_{1},-g_{2},0,\ldots
		,0,-g_{2},-g_{1}),
	$$%
	for which, using the formula (\ref{ev}) in the Appendix, we obtain the
	energy eigenvalues%
	$$
		\lambda _{m}=2g_{1}\left( 1-\cos \left( 
		\frac{2m\pi }{n+1}\right) \right) +2g_{2}\left( 1-\cos \left( \frac{4m\pi }{%
			n+1}\right) \right) .
	$$%
	Apart from the spurious eigenvalue $\lambda _{0}=0,$ all eigenvalues are
	positive, i.e. this choice of coupling produces an admissible ring model if $%
	g_{2}$ $\geq -\frac{1}{4}g_{1}.$ The way to see this is to look at the
	function%
	$$
		f(x)=1-\cos \left( 2\pi x\right) +c\left( 1-\cos \left( 4\pi x\right) \right)
	$$%
	between the endpoints $x=0$ $\left( \text{and }x=1\text{ symmetrically}%
	\right) $ where, for $x\gtrsim 0,$ it is of the approximate form%
	$$
		f(x)\approx 2\pi ^{2}\left( 1+4c\right) x^{2}
	$$%
	i.e. for positivity 
	$$
		\ 4c=4\frac{g_{2}}{g_{1}}\geq -1
	$$%
	is required.
\end{example}

\begin{remark}
	Had we chosen $g_{k}$ instead of $g_{2}$ non-zero the corresponding
	restriction would be%
	$$
		\ \ k^{2}\frac{g_{k}}{g_{1}}\geq -1,
	$$%
	meaning that longer range repulsion can only occur with ever smaller
	coupling strength (recall that a Hooke's law type force becomes stronger
	with increasing distance, hence could then more easily produce instability
	if negative).
\end{remark}

To construct long range repulsive interactions one needs to ensure
positivity (except for the spurious zero eigenvalue associated with the
vector $x\ $with $x_{k}=c$ for all $k$).

\begin{example}
	In Example \ref{Exam:XXX} , setting $g_{1}>0$ and $g_{k}$ negative\ for $k>1$
	positivity holds for 
	\begin{equation}
		g_{1}>\pi ^{2}\sum_{k=2}^{n/2}k^{2}\left\vert g_{k}\right\vert .
		\label{bound}
	\end{equation}
	
	For this estimate, one uses$\ $(\ref{lambda}) for $n/2\geq i>0$%
	$$
		\lambda _{i}=2g_{1}\left( 1-\cos \left( \frac{2\pi i}{n+1}\right) \right)
		-2\sum_{k=1}^{n/2}g_{k}\left( 1-\cos \left( \frac{2\pi ik}{n+1}\right)
		\right)
	$$%
	and 
	$$
		\left( \frac{x}{\pi }\right) ^{2}\leq 1-\cos x\ \text{and}\ 1-\cos x\leq
		\left( \frac{x}{2}\right) ^{2}
	$$%
	as a lower bound for the first and an upper bound on the second term.
	
	To ensure the estimate (\ref{bound}) e.g. for $g_{k}=-ck^{-\gamma }$ and
	arbitrary $n$, $\ $it$\ $is thus sufficient to choose $\gamma >3,$and $c\ $
	such that%
	$$
		g_{1}>c\pi ^{2}\sum_{k=2}^{\infty }k^{-\gamma +2}\ =c\pi ^{2}\left( \zeta
		(\gamma +2\right) -1),
	$$%
	where $\zeta $ is Riemann's zeta function.
\end{example}

\section{A Remark on the Brownian Case}

Contrary to the statement in \cite{Istas}, the construction of periodic
(f)Bm \ is via a \textit{semi}-positive definite covariance matrix for the
Brownian case $H=1/2.$ \ 

\begin{example}
	Consider a Brownian ring with $n=6$ steps: $X=\left( b(0),...,b(6\right) ).$%
	One computes%
	$$
		cov_{Y}=circ(1,0,0,-1,0,0)=\left( 
		\begin{array}{cccccc}
			1 & 0 & 0 & -1 & 0 & 0 \\ 
			0 & 1 & 0 & 0 & -1 & 0 \\ 
			0 & 0 & 1 & 0 & 0 & -1 \\ 
			-1 & 0 & 0 & 1 & 0 & 0 \\ 
			0 & -1 & 0 & 0 & 1 & 0 \\ 
			0 & 0 & -1 & 0 & 0 & 1%
		\end{array}%
		\right) .
	$$%
	Eigenvectors are $\left( 1,0,0,1,0,0\right) $ with eigenvalue zero, and $%
	(-1,0,0,1,0,0)$ with eigenvalue 2, together with their cyclic permutations.
	I.e. contrary to the assertion of Istas, the covariance matrices are no more
	positive definite when $H=1/2$.
\end{example}

More generally, Istas \cite{Istas}, p. 257, finds%
$$
	E\left( \left\vert \int_{0}^{2\pi }b^{H}\left( t\right) \exp \left(
	int\right) dt\right\vert ^{2}\right) =-\frac{4\pi ^{2}}{n^{2H+1}}%
	\int_{0}^{\pi }x^{2H}\cos \left( nx\right) dx,
$$%
and in the Brownian case the rhs is positve for odd $n\,$, but%
$$
	\int_{0}^{\pi }x\cos \left( nx\right) dx=0\text{ \ for \ even }n>0,
$$%
$\ $contrary to the assertion of Istas, nevertheless \emph{semi}-positivity
holds.

\begin{proposition}
	For $H=1/2$ a version of \ the periodic Brownian motion $b(\cdot )$ is
	obtained in terms of the Wiener process $B(\cdot )$ if one sets%
	$$
		b(t):=\left\{ 
		\begin{array}{ccc}
			B(t) & \text{for} & 0\leq t\ \leq \pi \\ 
			B(\pi )-B(t-\pi ) & \text{for} & \pi \leq t\ \leq 2\pi%
		\end{array}%
		\right. .
	$$
\end{proposition}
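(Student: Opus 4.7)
The plan is to verify that the process $b$ defined by the piecewise formula is a centered Gaussian process on $[0,2\pi]$ satisfying $b(0)=b(2\pi)=0$ and $E\!\left((b(s)-b(t))^{2}\right)=d(s,t)$, where $d(s,t)=\min(|s-t|,2\pi-|s-t|)$ is the geodesic distance on the circle of circumference $2\pi$. Since the covariance of a centered Gaussian process is determined by its squared increment function, this suffices to identify $b$ with the periodic Brownian motion characterized via the formula for $H=1/2$, $L=2\pi$.

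First I would dispose of the easy parts. The process is linear in the Wiener path $B$, so it is centered Gaussian; and directly $b(0)=B(0)=0$ and $b(2\pi)=B(\pi)-B(\pi)=0$, giving periodicity. What remains is to check the geodesic increment identity, which by symmetry I reduce to $0\le s\le t\le 2\pi$ and split into three cases.

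For $s\le t\le \pi$ one has $b(s)-b(t)=B(s)-B(t)$, giving variance $t-s$, which equals $d(s,t)$ since $t-s\le\pi$. For $\pi\le s\le t$ one has $b(s)-b(t)=B(t-\pi)-B(s-\pi)$, giving variance $t-s=d(s,t)$ for the same reason. The substantive case is $s\in[0,\pi]$, $t\in[\pi,2\pi]$, where
$$
b(s)-b(t)=\bigl(B(s)-B(\pi)\bigr)+B(t-\pi).
$$
Using independent-increments, one computes the cross-covariance as $E\!\bigl((B(s)-B(\pi))B(t-\pi)\bigr)=\min(s,t-\pi)-(t-\pi)$, and the outcome splits according to the sign of $s-(t-\pi)$. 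If $t-s\le\pi$ then $s\ge t-\pi$, the cross term vanishes, and the variance collapses to $(\pi-s)+(t-\pi)=t-s=d(s,t)$. If $t-s>\pi$ then $s<t-\pi$, the cross term equals $s-t+\pi$, and the variance becomes $(\pi-s)+(t-\pi)+2(s-t+\pi)=2\pi-(t-s)=d(s,t)$.

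The main obstacle is purely bookkeeping in the mixed case: one has to organize the sub-cases by the sign of $s-(t-\pi)$ (equivalently, whether the geodesic wraps around) and recognize that the piecewise definition of $b$ is tailored precisely so that the cross-correlation between the two Brownian segments cancels (or contributes) in exactly the amount needed to convert the Euclidean distance $t-s$ into its geodesic counterpart $2\pi-(t-s)$ once $t-s>\pi$. Once this verification is carried out, the Gaussian process $b$ has the correct covariance structure, so it is a version of periodic Brownian motion.
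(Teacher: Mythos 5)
Your proof is correct and follows essentially the same route as the paper: both arguments reduce the claim to checking the second-order structure of a centered Gaussian process with $b(0)=0$ by a case analysis on the positions of $s,t$ relative to $\pi$, the only cosmetic difference being that you verify $E\bigl((b(s)-b(t))^2\bigr)=d(s,t)$ while the paper states the equivalent four-case covariance $E\bigl(b(s)b(t)\bigr)$ and calls the verification straightforward. Your write-up actually carries out the mixed-case cross-covariance computation that the paper leaves to the reader, which is a welcome addition.
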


\begin{proof}
	Setting e.g. $L=2\pi \ $\ and $0\leq s\leq t\leq 2\pi $ without loss of
	generality, the covariance function given in (\ref{cov}) for $H=1/2$ is%
	$$
		E\left( b^{H}(s)b^{H}(t)\right) =\left\{ 
		\begin{array}{ccc}
			s & \text{for} & 0\leq s\leq t\leq \pi \\ 
			\pi +s-t & \text{for} & 0\leq s\leq \pi \text{ }\leq t\text{ and }t-s\leq \pi
			\\ 
			0 & \text{for} & 0\leq s\leq \pi \text{ }\leq t\text{ and }t-s\geq \pi \\ 
			2\pi -t & \text{for} & \pi \leq s\leq t%
		\end{array}%
		\right. .
	$$%
	Since $b(\cdot )$ as defined in the proposition is centered Gaussian, it is
	sufficient - and straightforward - to verify that it has this same
	covariance function.
\end{proof}

It is worth pointing out that a Brownian bridge such as 
\begin{equation}
	b(t)=B(t)-\frac{t}{2\pi }B(2\pi )  \label{pBm}
\end{equation}%
would not lead to stationary increments on the circle.\cite{Toeplitz}

\section{Supplementary Material on Circulant Matrices}

Note that Istas' cyclic correlation matrix is circulant and symmetric, hence
also its inverse.

Circulant matrices are of the general form%
$$
	C=\left( 
	\begin{array}{ccccc}
		c_{0} & c_{1} & c_{2} & \cdots & c_{n-1} \\ 
		c_{n-1} & c_{0} & c_{1} & \ddots & \vdots \\ 
		\vdots & c_{n-1} & \ddots & \ddots \  & \vdots \\ 
		\vdots &  & \ddots & \  & c_{1} \\ 
		c_{1} & \cdots & \cdots & c_{n-1} & c_{0}%
	\end{array}%
	\right) .
$$

Note that a real circulant matrix is symmetric iff 
$$
	c_{k}=c_{n-k}\text{ \ for \ \ }k>0.
$$%
Normalized eigenvectors \emph{for any }$C$ are%
$$
	e_{m}=\frac{1}{\sqrt{n}}\left( 1,\varrho ,\ldots ,\varrho ^{n-1}\right) 
	\text{ \ \ with \ \ }\varrho =e^{-\frac{2\pi i}{n}},
$$%
the corresponding eigenvalues are 
$$
	\lambda _{m}=\sum_{k=0}^{n-1}c_{k}e^{-\frac{2ikm\pi }{n}}=%
	\sum_{k=0}^{n-1}c_{k}\varrho ^{km},
$$%
i.e. they are simply the discrete Fourier transform of the first row of $C$.
For symmetric $C$ this formula gives%
\begin{eqnarray}
	\lambda _{m} &=&c_{0}+\sum_{k=1}^{n-1}c_{k}e^{-\frac{2ikm\pi }{n}%
	}=c_{0}+\sum_{k=1}^{n-1}c_{n-k}e^{-\frac{2imk\pi }{n}}\  \\
	&=&c_{0}+\frac{1}{2}\left( \sum_{k=1}^{n-1}c_{k}e^{-\frac{2imk\pi }{n}%
	}+\sum_{k=1}^{n-1}c_{k}e^{\frac{2imk\pi }{n}}\right)  \nonumber \\
	&=&\sum_{k=0}^{n-1}c_{k}\cos \left( \frac{2mk\pi }{n}\right) \text{ \ for \ }%
	m=0,\ldots ,n-1.  \label{ev}
\end{eqnarray}%
These are degenerate:%
$$
	\lambda _{m}=\lambda _{n-m}\text{ \ \ for \ }m>0
$$%
so that the corresponding eigenvectors can then be combined to be
real-valued. These $n\ $ eigenvectors are 
$$
	u_{j}^{(m)}=\cos \left( \frac{2\pi jm}{n}\right) ,\text{ \ \ }m=0,\ldots ,%
	\left[ \frac{n}{2}\right]
$$%
and%
$$
	v_{j}^{(m)}=\sin \left( \frac{2\pi jm}{n}\right) ,\text{ \ \ }m=1,\ldots ,%
	\left[ \frac{n-1}{2}\right] .
$$

\section{Conclusions and Outlook}

Having clarified the dynamical nature of fBm models for chains and rings,
one will next implement excluded volume effects. Where fBm rings fail to
exist we indicate how the control of energy eigenvalues allows for the
construction of alternate stiff ring models with a short or long range
interaction along the chain.

\begin{acknowledgement}
	We are most grateful to Roland Winkler for inspiration and many helpful
	discussions.
\end{acknowledgement}

\end{document}